\newtheorem{theorem}{Theorem}[section]
\newtheorem{requirements}[theorem]{Model Requirements}
\newtheorem{definition}[theorem]{Definition}
\newtheorem{proposition}[theorem]{Proposition}
\author{Will Hicks}
\begin{document}
\title{A Nonlocal Approach to The Quantum Kolmogorov Backward Equation and Links to Noncommutative Geometry}
\maketitle
\begin{abstract}
The Accardi-Boukas quantum Black-Scholes equation (\cite{AccBoukas}) can be used as an alternative to the classical approach to finance, and has been found to have a number of useful benefits. The quantum Kolmogorov backward equations, and associated quantum Fokker-Planck equations, that arise from this general framework, are derived using the Hudson-Parthasarathy quantum stochastic calculus (\cite{HP}). In this paper we show how these equations can be derived using a nonlocal approach to quantum mechanics. We show how nonlocal diffusions, and quantum stochastic processes can be linked, and discuss how moment matching can be used for deriving solutions.
\end{abstract}
\section{Introduction}
Stochastic calculus is used to model random processes for many applications (for example, see \cite{Oksendal}). Kolmogorov backward equations, and the Fokker-Planck equation, arise in the study of stochastic processes, as the partial differential equations whose solutions represent the expectation values for functions of the underlying random variable, and the probability density for the process respectively. For example, in the study of Mathematical Finance, Kolmogorov backward equations can be solved to find the risk neutral price of a derivative security, and the Black-Scholes equation is one specific example of such an equation (eg see \cite{Bjork} for more detail).
\newline
\newline
The majority of practioners in the finance industry, use models based on the application of Brownian motion, and Ito calculus. However, the application of quantum formalism to Mathematical Finance has been investigated by a number of sources. For example, see \cite{Haven_1}, \cite{Haven_2}, \cite{McCloud_1},\cite{McCloud_2}, and \cite{Segal}. Further, Accardi \& Boukas apply quantum stochastic calculus to the problem of derivative pricing, and the simulation of the financial markets (see \cite{AccBoukas}, \cite{HP}), and show how this leads to a `Quantum Black-Scholes' equation. Further analysis \& development of this method is presented in \cite{Hicks}, and \cite{Hicks2}. In particular, in \cite{Hicks2}, the author builds on techniques applied in \cite{Baaquie}, \cite{Baaquie_2}, and \cite{Linetsky}, by deriving kernel functions from quantum Kolmogorov backward equations, based on the path integral approach to quantum mechanics. The analysis shows that this can be achieved using a Hamiltonian function that is no longer a quadratic function of the momentum variable. Unfortunately, this fact leads to a number of complications.
\newline
\newline
In section \ref{intro}, we start by giving some background on quantum stochastic processes, and the quantum Black-Scholes equation. For more detail, readers can refer to \cite{AccBoukas}, and \cite{Hicks2}. Then in section \ref{GT}, we show using the example of the link between gauge transformations and changes of measure, how the non-quadratic Hamiltonian function, whilst useful in many circumstances, can lead to difficulty.
\newline
\newline
With this in mind, in section \ref{NCG} we outline the basis for a nonlocal approach. We proceed by borrowing some of the ideas \& techniques from noncommutative geometry. Noncommutative geometry was originally developed by Connes to extend the methods of algebraic geometry to the noncommutative setting (see \cite{Connes}). Sinha \& Goswami have also investigated links between quantum stochastic calculus, and noncommutative geometry in \cite{GoSi}. Noncommutative geometry is generally based on algebras of bounded operators, and results are often applied to compact manifolds. In the real world, one is generally concerned with unbounded operators on noncompact manifolds (eg $\mathbb{R}$). In this case, there are considerable difficulties in even showing that unbounded operators (which may not even share a common domain) form an algebra. However, we attempt to show in this article, that useful results can be obtained proceeding on a formal basis, using techniques inspired by noncommutative geometry. In addition to resolving complications associated with nonquadratic Hamiltonian functions, the nonlocal approach provides an alternative interpretation of solutions to the quantum Kolmogorov backward equation, or the quantum Black-Scholes equation of \cite{AccBoukas}.
\newline
\newline
In Mathematical Finance, the key financial interpretation behind the nonlocal approach rests with imposing a fundamental limit on the precision with which one can forecast a traded price in advance. For example, even if the financial market exists in a Dirac state: $\delta(x_0-x)$, we still do not know with certainty that we can fulfil a trading order at exactly this price. In section \ref{QKBE} we show how, by following this logic, the standard second order Fokker-Planck equation is transformed into the quantum Fokker-Planck equation discussed in \cite{Hicks}, and \cite{Hicks2}. We go on, in section \ref{MomMatch}, to show how the moments of a Gaussian kernel function are impacted by the introduction of the nonlocality.
\newline
\newline
The link between nonlocal diffusions and quantum stochastic processes was first discussed in \cite{Hicks}, where it is shown that given a quantum stochastic process, one can write the solution as a nonlocal diffusion. In \cite{Hicks2}, it is shown that, given a nonlocal diffusion, defined by a ``blurring'' function or ``nonlocalility'' function with defined moments, one can tailor a Riemannian metric such that the solution can be written as a quantum stochastic process. This article builds on this by showing how one can derive the quantum Fokker-Planck or quantum Kolmogorov backward equation directly from the nonlocal approach to quantum mechanics. The restriction to those nonlocality functions with defined moments, can be explained using physical principals.
\newline
\newline
In addition to providing theoretical insights into the understanding of quantum stochastic processes, it is hoped studying problems of quantum stochastic calculus using the nonlocal approach, and methods from noncommutative geometry, will provide new avenues for developing practical tools. For example, future development of calibration methods based on the associated heat kernel expansions.
\section{Quantum Stochastic Processes and the Quantum Fokker-Planck Equation}\label{intro}
In this section, we illustrate how quantum stochastic calculus can be applied to the simulation of the financial market, and derive the quantum Kolmogorov backward equation, and associated quantum Fokker-Planck equation. Then in subsequent sections, we illustrate how the same equations can be derived through using a nonlocal approach, and suggest how this can be embedded in the framework of noncommutative geometry.
\newline
\newline
The market that we are trying to model can be described by a state function sitting in the tensor product of the initial space $\mathcal{H}$ and the Boson Fock space: $\Gamma(L^2(\mathbb{R}^+,\mathcal{H}))$. This is described in more detail below.
\subsection{Initial Space}
The initial space: $\mathcal{H}$, is a Hilbert space that carries the price information from the current market. If we want to know the current price of the FTSE index, then this is represented by the operator $X$, where $X$ acts on the state function $\phi(x)$ by pointwise multiplication: $X\phi=x\phi(x)$. To get the expected price one can trade the FTSE index at right now, we carry out the following calculation:
\begin{equation}
\mathbb{E}\big[X\big]=\langle\phi,X\phi\rangle=\int_\mathbb{R} x\rvert\phi(x)\lvert^2dx, \text{for}~\phi(x)\in \mathcal{H}.
\end{equation}
If I know with certainty, that the FTSE is at $7000$, then the initial state function would be a Dirac state: $\rvert \phi(x)\lvert^2=\delta(7000-x)$.
\subsection{Boson Fock Space}
To define a quantum stochastic process we require a mechanism to incrementally amend the initial quantum state as time progresses, essentially by adding the drift and the random diffusion. This~is achieved using the Boson Fock space.
\newline
\newline
We start with functions from the time axis, with values in the Hilbert space that carries the pricing information ($\mathcal{H}$). This space is written: $\mathcal{K}=L^2(\mathbb{R}^+,\mathcal{H})$.
\newline
\newline
Next we take the exponential vectors, $\psi(f)$. For $f\in\mathcal{K}$ we have: $\psi(f)=(1,f,\frac{f\otimes f}{\sqrt{2}},....,\frac{f^{\otimes n}}{n^{1/2}},...)$, so that: $\langle \psi(f),\psi(g)\rangle=e^{\langle f,g\rangle}$, for $f,g\in \mathcal{K}$. The Boson Fock space is defined as the Hilbert space completion of these exponential vectors, which now provide the mechanism we require.
\newline
\newline
Our market state space is the tensor product space: $\mathcal{H}\otimes\Gamma(\mathcal{K})$. Initially at $t=0$, the Boson Fock space can be thought of as being empty (although this turns out to be unimportant). The operator $X$ that returns the expected FTSE price becomes $X\otimes\mathbb{I}$, where $\mathbb{I}$ represents the identity operator on the empty Boson Fock space, and the calculation of the FTSE index price right now, is unchanged.
\subsection{Quantum Drift}
A particle, with initial wave function $\phi(x)=\int_{\mathbb{R}} \tilde{\phi}(p)e^{ipx}dp$, in a system controlled by the Hamiltonian function $H(x,p)$, where $p$ represents the momentum, has a unitary time development operator: $U_t=e^{iHt}$. Thus if the operator $X_0$ returned the position at time $0$, then we have at time $t$ (we~assume Planck's constant $\hbar=1$):
\begin{equation}\label{Xt}
X_t=j_t(X_0)=U_t^*X_0U_t
\end{equation}
The Hamiltonian function $H(x,p)$ is the infinitesimal generator for the time development operator, and we can write the following quantum stochastic differential equation (SDE): 
\begin{equation}\label{drift}
dU_t=(-iHdt)U_t
\end{equation}
The situation for modelling the FTSE is exactly the same. To define a quantum SDE with drift, we~require a self-adjoint operator $H$, which controls the drift through Equations (\ref{Xt}) and (\ref{drift}).
\newline
\newline
For a classical particle with drift, the position is a deterministic function of time. Now the position of the particle is no longer deterministic. It is the wave function that evolves in a deterministic fashion.
\subsection{Quantum Diffusion}
We now add operators that allow the market state function to evolve stochastically. This is described by Hudson and Parthasarathy in~\cite{HP}. The operators we require act on the exponential vectors in the Boson Fock space as follows:
\begin{equation}
A_t\psi(g)=\bigg(\int_0^tg(s)ds\bigg)\psi(g), A^{\dagger}_t\psi(g)=\frac{d}{d\epsilon}\bigg\rvert_{\epsilon=0}\psi(g+\epsilon\chi_{(0,t)}), \Lambda_t\psi(g)=\frac{d}{d\epsilon}\bigg\rvert_{\epsilon=0}\psi(e^{\epsilon\chi_{(0,t)}}g)
\end{equation}
Further we can define the stochastic differentials as:
\begin{equation}
dA_t=\big(A_{t+dt}-A_t\big), dA^{\dagger}_t=\big(A^{\dagger}_{t+dt}-A^{\dagger}_{t}\big), d\Lambda_t=\big(\Lambda_{t+dt}-\Lambda_t\big)
\end{equation}
The significance of these operators derives from the functional form for the time development operator. In order for $U_t$ to be unitary, it must have the following form (see \cite{HP} Section 7):
\begin{equation}\label{dUt}
dU_t=-\Bigg(\bigg(iH+\frac{1}{2}L^*L\bigg)dt+L^*SdA_t-LdA^{\dagger}_t+\bigg(1-S\bigg)d\Lambda_t\Bigg)U_t
\end{equation}
where $H, L$ and $S$ are bounded linear operators on $\mathcal{H}$, with $H$ self-adjoint, and $S$ unitary. With $L=0$, and $S=1$, this reduces to the drift quantum SDE given in Equation (\ref{drift}).
\subsection{Quantum Ito Formula}
The quantum stochastic differentials can be combined using the following multiplication table (see \cite{AccBoukas} Lemma 1, and \cite{HP} Theorem 4.5):
\begin{equation}
\begin{tabular}{p{1cm}|p{1cm}p{1cm}p{1cm}p{1cm}}
-&$dA^{\dagger}_t$&$d\Lambda_t$&$dA_t$&$dt$\\
\hline
$dA^{\dagger}_t$&0&0&0&0\\
$d\Lambda_t$&$dA^{\dagger}_t$&$d\Lambda_t$&0&0\\
$dA_t$&$dt$&$dA_t$&0&0\\
$dt$&0&0&0&0
\end{tabular}
\end{equation}
We can see from the table above that:
\begin{equation}
\mathbb{E}\big[\big(\int_0^t dA_s+dA^{\dagger}_s)^2\big]=\mathbb{E}\big[\int_0^t dA_sdA_s+dA^{\dagger}_sdA^{\dagger}_s+dA_sdA^{\dagger}_s+dA^{\dagger}_sdA_s)\big]=\int_0^t ds=t=\mathbb{E}\big[W(t)^2\big].
\end{equation}
In fact, with $S=1$ in Equation (\ref{dUt}), the terms in $d\Lambda_t$ disappear. The resulting operator is commutative, and the resulting PDE is the same as the classical Black-Scholes PDE (\cite{AccBoukas} Proposition 2). For $S\neq 1$, we have a non-commutative system, and the Black-Scholes equations have more complicated dynamics. The key result, regarding the time development of $X^k$, can be obtained by application of the above multiplication rules, and is given by~\cite{AccBoukas} Lemma 1:
\begin{equation}\label{X^k}
\begin{split}
dj_t(X^k)=j_t(\lambda^{k-1}\alpha^{\dagger})dA^{\dagger}_t+j_t(\alpha\lambda^{k-1})dA_t+j_t(\lambda^k)d\Lambda_t+j_t(\alpha\lambda^{k-2}\alpha^{\dagger})dt\\
\alpha=[L^*,X]S, \alpha^{\dagger}=S^*[X,L], \lambda=S^*XS-X
\end{split}
\end{equation}
\subsection{Quantum Kolmogorov Backward Equation \& the Quantum Fokker-Planck Equation}
First, expanding a function: $u(t,j_t(X))$, as a power series, we get:
\begin{equation}
u(t,x)=\sum_{n,k\geq 0} \frac{\partial^{n+k}u}{\partial t^n\partial x^k}\Big\rvert_{t=t_0,x=x_0}(t-t_0)^n(x-x_0)^k
\end{equation}
To calculate the expected value of $u(t,X_t)$ we can apply the Quantum version of the Ito lemma from above, and collect together the terms in $dt$.
\newline
\newline
If we assume $S$, from equation (\ref{dUt}), represents a Lebesgue invariant translation, $T_{\varepsilon}$, we have(for $f(x)\in L^2(\mathbb{R})$):
\begin{equation}
\lambda f(x)=T_{-\varepsilon}XT_{\varepsilon}f(x)-Xf(x)=T_{-\varepsilon}xf(x-\varepsilon)-xf(x)=\varepsilon f(x)
\end{equation}
So we have in this case $\lambda=\varepsilon$, and the drift free Brownian motion:
\begin{equation}
dX_t=\sigma dA_t+\sigma dA_t^{\dagger}
\end{equation}
becomes instead:
\begin{equation}
dX_t=\sigma dA_t+\sigma dA_t^{\dagger}+\varepsilon d\Lambda_t
\end{equation}
Applying (\ref{X^k}) to the expansion for $du$:
\begin{equation}
du(t,X_t)=\frac{\partial u}{\partial t}dt+\sum_{k\geq 1} \frac{\partial^k u}{\partial x^k}dX_t^k
\end{equation}
and collecting together terms in $dt$, we get the quantum Kolmogorov backward equation by taking expectations. In this case:
\begin{equation}
\frac{\partial u}{\partial t}+\sum_{k\geq 2}\frac{\sigma^2\varepsilon^{k-2}}{k!}\frac{\partial^k u}{\partial x^k}=0
\end{equation}
We can derive the equivalent quantum Fokker-Planck equation, as shown in \cite{Hicks} proposition 3.1, by successive integration by parts:
\begin{equation}\label{QFP_eqn}
\frac{\partial p}{\partial t}=\sum_{k\geq 2}\frac{\sigma^2(-\varepsilon)^{k-2}}{k!}\frac{\partial^k p}{\partial x^k}
\end{equation}
\section{Gauge Transformations and Change of Measure:}\label{GT}
Here we extend the analysis, originally carried out in \cite{PHL}, to the Accardi-Boukas Quantum Black-Scholes world (see \cite{AccBoukas}, \cite{Hicks}, \cite{Hicks2}). In \cite{PHL} chapter4, Henry-Labord{\`e}re develops the classical approach to finance using the Heat Kernel on a Riemannian manifold. In section 4.5, he shows how a gauge transformation can easily be shown to be equivalent to a change of measure.
\newline
\newline
In this section we show how to apply this to the Quantum Black-Scholes equation, and how the non-quadratic Hamiltonian functions that arise from general quantum stochastic processes, lead to complications. In addition to providing insights into the physics of these Hamiltonians, it provides incentive for investigating whether quantum stochastic processes can be modelled using standard quadratic Hamiltonians. This is the subject of sections \ref{NCG} \& \ref{QKBE}.
\newline
\newline
Start from the following quantum Kolmogorov backward equation.
\begin{equation}\label{eq1.1}
\partial_{\tau}u(\tau,x)+\sigma^2\sum_{k\geq 2} \frac{\varepsilon^{(k-2)}}{k!}\partial_x^k u(\tau,x)=0
\end{equation}
This equation can be derived from the following Hamiltonian (see \cite{Hicks2}):
\begin{equation}\label{eq1.2}
\hat{H} = \sigma^2\sum_{k\geq 2} \frac{\varepsilon^{k-2}\hat{P}^k}{k!}=\frac{\sigma^2}{\varepsilon^2}\Big(exp(\varepsilon\hat{P})-\varepsilon\hat{P}-1\Big)
\end{equation}
By using the transformation: $(\tau,x)=(it,-iy)$, and the relation: $\hat{P}=i\partial_y=\partial_x$, we get:
\begin{equation}\label{eq1.3}
\hat{H} = \sigma^2\sum_{k\geq 2} \frac{\varepsilon^{k-2}}{k!}\partial_x^k
\end{equation}
We now apply a gauge transformation: $u'(\tau,x)=e^{\Lambda(x)}u(\tau,x)$. We get:
\begin{equation}\label{eq1.4}
\partial_x u'(\tau,x)=e^{\Lambda(x)}\partial_x u(\tau,x)+\partial_x\Lambda(x) u(\tau,x)
\end{equation}
If we set $\hat{P}'=i\partial_y-i\partial_y\Lambda(iy)$, then we get:
\begin{equation}\label{eq1.5}
\hat{P}'u'=\hat{P}u
\end{equation}
\begin{equation}\label{eq1.6}
\hat{P}'=\hat{P}+v(x)
\end{equation}
So, at a classical level, we find that applying the gauge transformation is equivalent to adding a function of $x$ to the momentum variable:$p'=p+v(x)$. The classical Hamiltonian is given by:
\begin{equation}
H(p)=\frac{\sigma^2}{\varepsilon^2}\Big(exp(\varepsilon p)-\varepsilon p-1\Big)
\end{equation}
Inserting: $p=p'-v(x)$, we get:
\begin{equation}
H(p')=H(p',x)=\frac{\sigma^2}{\varepsilon^2}\Big(exp(-\varepsilon v(x))exp(\varepsilon p')-\varepsilon p'+\varepsilon v(x)-1\Big)
\end{equation}
We now carry out the Legendre transformation in order to derive a formula for the velocity, in terms of the momentum. Differentiating: $p'\dot{x}-H(p',x)$, with respect to $p'$, and setting to zero, we get the classical velocity in the new coordinate system:
\begin{equation}
\dot{x}'=\frac{\sigma^2}{\varepsilon}\Big(exp(-\varepsilon v(x))exp(\varepsilon p')-1\Big)
\end{equation}
So, the canonical momentum is given by:
\begin{equation}
p'_0(\dot{x},x)=\frac{\sigma^2}{\varepsilon}ln\Big(\frac{\varepsilon\dot{x}}{\sigma^2}+1\Big)+v(x)
\end{equation}
We find that, the translated canonical momentum: $p'_0(\dot{x},x)$ is given by translating the original canonical momentum: $p_0(\dot{x})$ as before. However, note the following:
\begin{itemize}
\item The velocity: $\dot{x}$, has a nonlinear relationship with $p_0$, and so the gauge transformation does not lead to a simple translation of the velocity.
\item The Lagrangian is no longer a quadratic function of $\dot{x}$.
\end{itemize}
We will see, that in the Quantum case, $\varepsilon\neq 0$, this leads to difficulty in terms of showing that the gauge transformation leads to a simple drift change.
\subsection{Classical Case, $\varepsilon=0$}
In this case we have: $\dot{x}=\sigma^2p$ and so $\dot{x}'=\sigma^2\big(p-v(x)\big)$. The Lagrangian becomes:
\begin{equation}
L'(\dot{x}',x)=\frac{(\dot{x}'+\sigma^2v(x))^2}{2\sigma^2}
\end{equation}
The resulting integral kernel: $K^0_T(x)'$ is given by (where $\mathcal{D}x$ represents the Feynman path integral):
\begin{equation}
K^0_T(x)'=\int_{\infty}^{\infty} exp\bigg(-\int_0^T \frac{\big(\dot{x}'+\sigma^2v(x)\big)^2}{2\sigma^2}dt\bigg)\mathcal{D}x
\end{equation}
So we still end up with a Gaussian kernel function, but with a drift that depends on $x$. Importantly, $L'(\dot{x}',x)=L(\dot{x}+\sigma^2v(x))$. Where $\Lambda(x)=cx$, we end up with a constant drift. This mirrors the analysis presented in \cite{PHL} chapter 4.5. The financial implications are that a change of measure, or a change of risk free numeraire, can be achieved using the gauge transformation.
\subsection{Quantum Case, $\varepsilon\neq 0$}
Now, we have:
\begin{equation}
\dot{x}'=\dot{x}-\sigma^2v(x)+\sum_{k\geq 2}\frac{\varepsilon^{k-1}\sigma^2v(x)^k}{k!}
\end{equation}
Therefore, the new Lagrangian becomes:
\begin{equation}
L'(\dot{x}',x)=\sum_{k\geq 0} \frac{(-\varepsilon)^k\Big(\dot{x}'+\sigma^2v(x)-\sum_{k\geq 2}\frac{\varepsilon^{k-1}\sigma^2v(x)^k}{k!}\Big)^{(k+2)}}{\sigma^{2(k+1)}(k+1)(k+2)}
\end{equation}
Therefore, unlike the classical case, $L'(\dot{x}',x)\neq L(\dot{x}+\sigma^2v(x))$. It is no longer immediately clear that the integral kernel function post gauge transformation: $K^{\varepsilon}_T(x)'$ is a simple translation of $K^{\varepsilon}_T(x)$.
\section{Outline of the Nonlocal Approach}\label{NCG}
\subsection{Spectral Triples}
The framework of noncommutative geometry enables one to write the physical laws arising from non-Abelian gauge field theories, using the language and tools of differential geometry. See for example \cite{VanSuij}. Noncommutative geometry is based on the spectral triple, which is defined by:
\begin{itemize}
\item A Hilbert space $\mathcal{H}$.
\item A unital $C^*$ algebra: $\mathcal{A}$, represented as bounded operators on $\mathcal{H}$.
\item A self-adjoint operator $\mathcal{D}$, such that the resolvent $(I+\mathcal{D})^{-1}$ is a compact operator and $[\mathcal{D},a]$ is bounded for $a\in\mathcal{A}$.
\end{itemize}
For example, the basic example of a spectral triple (see \cite{VanSuij}, chapter 4.3) is the canonical triple associated with a compact Riemannian spin manifold, $M$:
\begin{itemize}
\item $\mathcal{A}=C^{\infty}(M)$, the algebra of smooth functions on $M$.
\item $\mathcal{H}=L^2(S)$ of square integrable sections of spinor bundle $S\rightarrow M$.
\item $\mathcal{D_M}$ the Dirac operator with associated connection.
\end{itemize}
Once a suitable spectral triple has been defined, the key mathematical tool for understanding the behaviour of a system is the heat kernel expansion (see \cite{VanSuij}, chapter 7.2):
\begin{equation}
Tr\Big(e^{-tD^2}\Big)=\sum_{\alpha} t^{\alpha}c_{\alpha}
\end{equation}
Readers interested in the application of noncommutative geometry to quantum stochastic calculus should refer to \cite{GoSi}, where the authors outline a number of examples of heat semi-groups on noncommutative spaces, and noncommutative spectral triples.
\newline
\newline
In the next section, proceeding in a mainly formal basis, we suggest ways in which this framework can be used to develop the integral kernels used for the modelling the probability spaces defined by quantum stochastic differential equations. This in turn provides an alternative way to view quantum diffusion processes as nonlocal diffusions.
\subsection{A Nonlocal Approach}\label{NL}
One of the key steps in derivations of the Feynman Path Integral (for example see \cite{Folland}, chapter 8), involves the definition of so called ``generalised states'': $\vert x_0\rangle=\delta(x_0-x)$, to represent a particle situated at position $x_0$ with probability $1$. In reality however, it is rarely possible to know this information with complete precision. For example, even if I see a price quote for a traded asset on a Bloomberg screen, it is still reasonably unlikely that I can fulfil my full order by trading at exactly this price. Usually, for $\vert x_0\rangle$, we can measure the expected price as:
\begin{equation}
\mathbb{E}[X]=\langle x_0\vert Xx_0\rangle=\int_M y\delta(x_0-y)dy=x_0
\end{equation}
We can incorporate the uncertainty around what price we will actually achieve when executing the trade, by changing the operator to incorporate a probability distribution, $H(y)$:
\newline
\newline
The operator $X$ becomes: $(X,H)$, and rather than $X\psi(x)=x\psi(x)$, we have:
\begin{equation}
(X,H)\psi(x)=x\int_M\psi(x-y)H(y)dy
\end{equation}
To calculate the expected price we can trade at, given the market is state: $\psi(x)$, we get:
\begin{equation}
\mathbb{E}[X]=\int_M\overline{\psi(y)}\int_{\mathbb{R}}y\psi(y-u)H(u)dudy
\end{equation}
In effect, even if we know that the current market price is precisely $x_0$, there is still some uncertainty over the actual price I trade at. This uncertainty is introduced into the model using the function $H(y)$.
\newline
\newline
In terms of the spectral triple, one possible approach would be for the canonical triple over the manifold $M$, with 1 dimensional spinor bundle: $(C^{\infty}(M),L^2(M),\partial_x+\mathcal{A}_x)$ to become: $(C^{\infty}(M),L^2(M,H(y)dy),\partial_x+\mathcal{A}_x)$, where $\mathcal{A}_x$ represents the connection.
\newline
\newline
However, the $C^*$ algebra: $C^{\infty}(M)$ no longer forms an algebra over the Hilbert space: $L^2(M,H(y)dy)$. To rectify this, we define the spectral triple:
\begin{equation}
(C^{\infty}(M)\otimes \Gamma(E),\mathcal{H},(\partial_x,H))
\end{equation}
Where:
\begin{itemize}
\item $\Gamma(E)$ represents a fibre bundle, whereby each section is a continuous function with values in the space of probability distributions over the manifold $M$: $C_0(M;S)$ for a space of probability distributions: $S$.
\item $\mathcal{H}$ represents a dense subset of $L^2(M)$.
\item $H\in S$.
\item $(f(x),H(x))\in C^{\infty}(M)\otimes S$ acts on $\psi(x)$ by: $(f,H)\psi(x)=f(x)\int_M \psi(x-y)H(y)dy$.
\item The Dirac operator: $(\partial_x,H)$ acts on $\psi(x)$ by: $\partial_x\Big(\int_M \psi(x-y)H(y)dy\Big)$.
\end{itemize}
We explain this in more detail in \ref{NST}.
\subsection{A Noncommutative Spectral Triple}\label{NST}
\subsubsection{Action of $C^{\infty}(\mathbb{R})$ by pointwise multiplication}
The first point to consider, is that spectral triples are defined in relation to $C^*$ algebras of bounded operators, whereas many of the operators we require are unbounded. Alternatively, spectral triples are often defined on a {\it compact} manifold M, whereas the majority of real life examples (for example traded financial underlyings) exist on noncompact manifolds, such as the real number line: $\mathbb{R}$.
\newline
\newline
Whilst noting this critical point, the objective in this article is to investigate how the general framework of noncommutative geometry could apply. Therefore, for now we sweep this issue aside and carry on the investigation using the ``algebra'' of smooth functions: $C^{\infty}(\mathbb{R})$, with the aim of generating results that can be used for real world applications.
\newline
\newline
Next, if we view the ``spectral triple'' as: $(C^{\infty}(\mathbb{R}),L^2(\mathbb{R},H(y)dy),\partial_x+\mathcal{A}_x)$, with $C^{\infty}(\mathbb{R})$ acting on the Hilbert space by pointwise multiplication, then we find that:
\begin{equation}
\Big(a(x)b(x)\Big)\circ \psi(x)=a(x)b(x)\int_{\mathbb{R}}\psi(x-y)H(y)dy
\end{equation}
However:
\begin{equation}
a(x)\circ \Big(b(x)\psi(x)\Big)=a(x)\int_{\mathbb{R}}b(x-y_1)\psi(x-y_1-y_2)H(y_1)H(y_2)dy_1dy_2\neq \Big(a(x)b(x)\Big)\circ \psi(x)
\end{equation}
So we find that $C^{\infty}(\mathbb{R})$ cannot form an algebra under composition of operators. Since this is a crucial ingredient of the path integral construction, we do not use this approach.
\subsubsection{Action of $C^{\infty}(\mathbb{R})\otimes \Gamma(E)$ on a dense subset of $L^2(\mathbb{R})$}
First, we denote the sections of $\Gamma(E)$ as: $H(z;x)$. This function, returns a distribution: $H\in S$ for each position: $x$ on the underlying manifold (in this case the real numbers, $\mathbb{R}$). Define the action of $(a(x),H(z;x))$ on a dense subset: $\mathcal{H}$ of $L^2(\mathbb{R})$ by:
\begin{equation}\label{rep}
(a(x),H(z;x))\psi(x)=a(x)\int_{\mathbb{R}}\psi(x-y)H(y;x)dy
\end{equation}
Now, we consider the bilinear map, which we define as follows:
\begin{definition}\label{bl_map}
\begin{equation}
\begin{split}
\Big(a(x),H_a(z;x)\Big)\Big(b(x),H_b(z;x)\Big)=\Big(a(x),H_{ab}(z;x)\Big)\\
H_{ab}(z;x)=\int_{\mathbb{R}} H_a(u;x)b(x-u)H_b(z-u;x-u)du
\end{split}
\end{equation}
\end{definition}\ \\
Using definition \ref{bl_map}, we find that the composition of different operators now works:
\begin{equation}
\bigg(\Big(a(x),H_a(z;x)\Big)\Big(b(x),H_b(z;x)\Big)\bigg)\circ \psi(x)=a(x)\int_{\mathbb{R}}\int_{\mathbb{R}} b(x-u)\psi(x-z)H_a(u;x)H_b(z-u;x-u)dudz
\end{equation}
Set: $z=y_1+y_2$, and we get:
\begin{equation}
\begin{split}
=a(x)\int_{\mathbb{R}}\int_{\mathbb{R}} b(x-y_1)\psi(x-y_1-y_2)H_a(y_1;x)H_b(y_2;x-y_1)dy_1dy_2\\
=(a(x),H_a(z;x))\circ\Big(b(x)\int_{\mathbb{R}}\psi(x-y)H_b(y;x)dy\Big)
\end{split}
\end{equation}
We also note that, for $C^{\infty}(\mathbb{R})\otimes \Gamma(E)$ to form a unital algebra, the space $S$ must include the Dirac delta, since under (\ref{bl_map}) we have:
\begin{equation}
\begin{split}
(1,\delta)(a,H)=(1,\int_{\mathbb{R}}\delta(u)a(x-u)H(z-u;x-u)du)\\
=(1,a(x)H(z;x))=(a,H)\\
(a,H)(1,\delta)=(a,\int_{\mathbb{R}}\delta(z-u)H(u;x)du)=(a,H)
\end{split}
\end{equation}
In order to show that: $C^{\infty}(\mathbb{R})\otimes \Gamma(E)$ forms a noncommutative $C^*$ algebra we still need to show:
\begin{enumerate}
\item[1)] $A(BC)\circ\psi=(AB)C\circ \psi$ under the bilinear map (\ref{bl_map}), and the action (\ref{rep}).
\item[2)] We can define a norm $\vert\vert..\vert\vert$, such that our algebra is complete in the norm topology.
\item[3)] We can define an involution $*$ such that $\vert\vert A^*A\vert\vert=\vert\vert A\vert\vert^2$
\end{enumerate}
\underline{Proof of 1):}
\newline
\newline
Using the bilinear map \ref{bl_map}, we have:
\begin{equation}
\Big(b(x),H_b(z;x)\Big)\Big(c(x),H_c(z;x)\Big)=\Big(b(x),\int_{\mathbb{R}} c(x-y)H_b(y;x)H_c(z-y;x-y)dy
\end{equation}
So inserting this into $A(BC)$, where $A=(\big(a(x),H_a(z;x)\Big)$, and so on for $B, C$, we get:
\begin{equation}\label{A(BC)}
\begin{split}
\Big(a(x),H_a(z;x)\Big)\circ\bigg(\Big(b(x),H_b(z;x)\Big)\Big(c(x),H_c(z;x)\Big)\bigg)\\
=\Big(a(x),\int_{\mathbb{R}}\int_{\mathbb{R}}b(x-u)c(x-y_1-y_2)H_a(u;x)H_b(y_2;x-y_1)H_c(z-y_1-y_2;x-y_1-y_2)dy_2dy_1\Big)
\end{split}
\end{equation}
Similarly, we have:
\begin{equation}
\Big(a(x),H_a(z;x)\Big)\Big(b(x),H_b(z;x)\Big)=\Big(a(x),\int_{\mathbb{R}} b(x-u)H_a(u;x)H_b(z-u;x-u)du\Big)
\end{equation}
So inserting this into: $(AB)C$, we get:
\begin{equation}\label{(AB)C}
\begin{split}
\bigg(\Big(a(x),H_a(z;x)\Big)\Big(b(x),H_b(z;x)\Big)\bigg)\circ\Big(c(x),H_c(z;x)\Big)\\
=\Big(a(x),\int_{\mathbb{R}}b(x-y_1)c(x-y_2)H_a(y_1;x)H_b(y_2-y_1;x-y_1)H_c(z-y_2;x-y_2dy_1dy_2\Big)
\end{split}
\end{equation}
Exchanging the order of integration, and replacing $y_2'=y_2+y_1$ we see that equations (\ref{A(BC)}) and (\ref{(AB)C}) match.
\newline
\newline
\underline{Proof of 2):}
\newline
\newline
We can use the standard operator norm. If $A=(a,H)$, we have:
\begin{definition}
Where: $\vert\vert\psi\vert\vert$ is given by $\sqrt{\langle\psi\vert\psi\rangle}$, we have:
\begin{equation}
\vert\vert A\vert\vert=\sup_{\vert\vert\psi\vert\vert=1} \vert\vert A\psi\vert\vert
\end{equation}
\end{definition}\ \\
In our case:
\begin{equation}
\vert\vert A\vert\vert=\sup_{\vert\vert\psi\vert\vert=1} \int_{\mathbb{R}} \overline{A\psi(y)}A\psi H(y)dy
\end{equation}
Assuming we restrict $H(Y)$ to a dense subset of $L^2(\mathbb{R})$, as described in the proof of 1) above, completeness follows from the underlying commutative $C^*$ algebra.
\newline
\newline
\underline{Proof of 3):}
\newline
\newline
For the commutative $C^*$ algebra of complex valued smooth functions on a Riemannian manifold: $C^{\infty}(M)$, we have that $a(x)^*$ is given by the complex conjugate: $\overline{a(x)}$. This can be carried over into the noncommutative $C^*$ algebra. So $(a,H)^*$ becomes $(\overline{a(x)},H)$.
\subsubsection{Defining $\mathcal{H}$, and $S$}
We have the following basic requirements, in order for our model to be useful in practice:
\begin{requirements}\label{model_req}\
\begin{enumerate}
\item[i)] For all $H\in S$, and all $\psi\in \mathcal{H}$, the convolution: $H\ast\psi$ exists. If this condition is not met, then we need to restrict the distributions in $S$, in order that we can carry out necessary calculations.
\item[ii)] The Hilbert space $\mathcal{H}$ is dense in $L^2(\mathbb{R})$. If this condition is not met, then there will be valid market states, that we cannot represent in our Hilbert space.
\end{enumerate}
\end{requirements}\ \\
It turns out that this can be achieved, by restricting the space of distributions: $S$ to those probability distributions, where the moments are defined, and this in turn links these noncommutative ``spectral triples'' to the quantum stochastic processes discussed by Hudson \& Parthasarathy in \cite{HP}.
\begin{proposition}\label{prop_1}
For the ``spectral triple'': $\big(C^{\infty}(\mathbb{R})\otimes \Gamma(E),\mathcal{H},(\partial_x,H)\big)$, if $\mathcal{H}$ is dense in $L^2(\mathbb{R})$ then the space of distributions meeting requirements \ref{model_req} consists only of those distributions: $H(x)$ such that the moments:
\begin{equation}
\mu_i(H)=\int_{\mathbb{R}}y^iH(y)dy
\end{equation}
exist for all $i\geq 0$.
\end{proposition}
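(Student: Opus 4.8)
The plan is to read the phrase ``necessary calculations'' in Model Requirements \ref{model_req}, condition i), as the moment expansion that was used in Section \ref{intro} to pass from the action of a quantum stochastic operator to a (quantum Kolmogorov backward) differential equation, and then to show that this expansion is available, uniformly over a dense family of states $\psi$, exactly when every moment $\mu_i(H)$ is finite. By the representation (\ref{rep}) it is enough to study the unital generator $(1,H)\psi(x)=\int_{\mathbb{R}}\psi(x-y)H(y)\,dy=(H*\psi)(x)$, since the smooth prefactor $a(x)$ only multiplies pointwise and plays no role in the $y$-integration. Taylor expanding $\psi(x-y)=\sum_{i\ge0}\frac{(-y)^i}{i!}\psi^{(i)}(x)$ and integrating against $H$ term by term gives, at least formally,
\begin{equation}
(H*\psi)(x)=\sum_{i\ge0}\frac{(-1)^i}{i!}\Big(\int_{\mathbb{R}}y^iH(y)\,dy\Big)\psi^{(i)}(x)=\sum_{i\ge0}\frac{(-1)^i\mu_i(H)}{i!}\,\psi^{(i)}(x),
\end{equation}
which is precisely the shape of the generator of the quantum Kolmogorov backward equation of Section \ref{intro}. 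So the object of the proof is: this identification makes sense (uniformly in $\psi$ over a dense subspace) if and only if each $\mu_i(H)$ is a well-defined number.

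For the sufficiency direction I would argue as follows. If all $\mu_i(H)$ exist then $H$ has finite absolute moments of every order, so Chebyshev's inequality gives $\int_{|y|>R}H(y)\,dy\le R^{-N}\mu_N(H)$ for all $N$, i.e. $H$ decays faster than any power. Taking $\mathcal{H}=\mathcal{S}(\mathbb{R})$, which is dense in $L^2(\mathbb{R})$ so that Model Requirements \ref{model_req}, condition ii), holds, one splits the convolution integral over $\{|y|\le|x|/2\}$ and $\{|y|>|x|/2\}$ to check that $H*\psi$ exists, is smooth, and is again a Schwartz function; hence $(\partial_x,H)$ and all the products from Definition \ref{bl_map} genuinely act on $\mathcal{H}$, and the series above converges to $H*\psi$ for the (still dense) class of $\psi$ that are entire of suitable order, and is a valid asymptotic expansion otherwise. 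Thus every $H$ with all moments finite does meet the requirements.

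For necessity, suppose some $\mu_N(H)$ fails to exist, i.e. $\int_{\mathbb{R}}|y|^N H(y)\,dy=\infty$. Then the $N$-th coefficient in the expansion above is undefined, so $(1,H)$ cannot be written as the differential generator of a quantum Kolmogorov backward equation and the computation of $\mathbb{E}[u(t,X_t)]$ performed in Section \ref{intro} has no meaning. To make this concrete I would probe with a state that locally looks like $x^N$: since $\mathcal{H}$ is dense (and, for the model to be useful, rich enough to contain such probes) one can find $\psi\in\mathcal{H}$ equal to $x^N$ on an arbitrarily long interval $[-R,R]$ and cut off outside, for which $(H*\psi)(0)$ reproduces $\int_{-R}^{R}(-y)^N H(y)\,dy$, a quantity that diverges as $R\to\infty$. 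Hence Model Requirements \ref{model_req}, condition i), is violated and no such $H$ can sit inside a valid $S$.

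The step I expect to be the main obstacle is \emph{necessity}, precisely because ``the convolution $H*\psi$ exists'' is, read literally, almost automatic: for a probability distribution $H$ and $\psi\in L^2(\mathbb{R})$ the convolution always exists as an $L^2$ function. The real content is therefore not existence of the integral but compatibility with the moment/Taylor machinery on which the whole framework rests, so the delicate part is to formulate ``the necessary calculations fail'' sharply --- via the undefined coefficient $\mu_N(H)$, or via a family of probe states in a dense subspace --- and to note that the distributions which survive are exactly those producing the finite coefficients $\mu_k(H)$ seen in the $dj_t(X^k)$ expansion of Section \ref{intro}, which is what ties these nonlocality functions back to the Hudson--Parthasarathy quantum stochastic processes.
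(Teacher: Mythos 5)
Your overall strategy---expand $\psi(x-y)$ in powers of $y$, identify the coefficients with the moments $\mu_i(H)$, and conclude that the ``necessary calculations'' are available exactly when all moments are finite---is the same one the paper uses: its proof takes the dense subspace $\mathcal{K}\subset L^2(\mathbb{R})$ of power-series states, expands $\int(x-y)^nH(y;x)\,dy$ binomially, and derives a contradiction from the lowest undefined moment $\mu_N$ by a polynomial-subtraction argument ($\mu_N=(S_N-S_{N-1})/b_N$ would then be finite). You are also right, and more candid than the paper, that for a probability density $H$ and $\psi\in L^2$ the convolution $H*\psi$ literally always exists, so the whole proposition only has content once ``exists'' is read as ``the term-by-term moment expansion is valid''; the paper makes this reading silently.

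The genuine gap is in your necessity step. Your probe states are truncations of $x^N$ to $[-R,R]$, and for every finite $R$ the resulting convolution $(H*\psi_R)(0)=\int_{-R}^{R}(-y)^NH(y)\,dy$ is a perfectly finite number. Requirement i) of Model Requirements \ref{model_req} is a statement about each individual $\psi\in\mathcal{H}$, so exhibiting a family whose values diverge as $R\to\infty$ does not produce a single state for which the convolution (or the expansion) fails; you have shown unboundedness of a family, not failure of existence. You flag this yourself as ``the main obstacle'' but do not close it. The paper avoids it by working directly with untruncated power-series states, for which $\int P(y)H(y)\,dy$ with $\deg P=N$ genuinely fails to converge absolutely when $\mu_N$ is undefined. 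Relatedly, you assume $\mathcal{H}$ is ``rich enough to contain such probes,'' but the hypothesis is only that $\mathcal{H}$ is dense in $L^2(\mathbb{R})$; density does not put your truncated monomials (or any particular probe) inside $\mathcal{H}$. The paper supplies the missing transfer step: given any $f\in\mathcal{H}$, it picks $g\in\mathcal{K}$ with $\|f-g\|<\epsilon$ and uses the triangle inequality $\bigl\vert\int fH-\int gH\bigr\vert\le\int\vert f-g\vert H<\epsilon$ to propagate the obstruction from $\mathcal{K}$ to an arbitrary dense $\mathcal{H}$. Your sufficiency direction (Schwartz space, rapid decay of $H$ from Chebyshev) is fine but is not actually part of what the proposition asserts; the missing pieces are both on the necessity side.
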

\begin{proof}
Consider the dense subspace $\mathcal{K}\subset L^2(\mathbb{R})$ consisting of those functions, that can be expanded as a power series: $\psi(x)=\sum_{n\geq 0} a_nx^n$. Requirement \ref{model_req} i) means that the integral:
\begin{equation}\label{I1}
\sum_{n\geq 0}a_n\int_{\mathbb{R}} (x-y)^nH(y;x)dy=\sum_{n\geq 0}a_n{n\choose i}\sum_{i\leq n}(-1)^ix^{n-i}\int_{\mathbb{R}} y^iH(y;x)dy
\end{equation}
must exist. Now assume further that $N$ is the lowest integer such that $\mu_N(H)$ doesn't exist. The existence of (\ref{I1}) implies that there must exist a polynomial, $P(x)$ of degree $N$, such that:
\begin{equation}
\int_{\mathbb{R}} P(y)H(y;x)dy=S_N<0
\end{equation}
exists. We can write $P(x)=\sum_{i\leq N}b_ix^i$ for complex $b_i$. Now, we write $p_{N-1}(x)=\sum_{i\leq N-1}b_ix^i$. Since all moments less than $N$ exist we have:
\begin{equation}
\int_{\mathbb{R}} P_{N-1}(y)iH(y;x)dy=S_{N-1}<0
\end{equation}
Now we have: $\mu_N(H)=\frac{S_N-S_{N-1}}{b_N}<\infty$, which is a contradiction. Therefore, if $\mathcal{H}=\mathcal{K}$, those functions in $L^2(\mathbb{R})$ that can be expanded as a power-series, then $S$ must consists only of distributions such that all the moments exist.
\newline
Therefore, assume we choose $\mathcal{H}\neq \mathcal{K}$. Now we assume we have: $f\in\mathcal{H}$ such that:
\begin{equation}
\int_{\mathbb{R}}f(y)H(y;x)dy<\infty
\end{equation}
even though the moments of $H(z;x)$ do not exist. Since $\mathcal{K}$ is dense in $L^2(\mathbb{R})$, and $\mathcal{H}\subset L^2(\mathbb{R})$, there must exist $g(x)\in \mathcal{K}$, such that: $\vert\vert g(x) - f(x)\vert\vert<\epsilon$, for arbitrarily small $\epsilon$. By the triangle inequality we have:
\begin{equation}
\bigg\vert\int_{\mathbb{R}}f(y)H(y;x)dy-\int_{\mathbb{R}}g(y)H(y;x)dy\bigg\vert\leq\int_{\mathbb{R}}\vert f(y)-g(y)\vert H(y;x)dy<\epsilon
\end{equation}
However, we have that if all moments for $H(z;x)$ are not defined, then:
\begin{equation}
\int_{\mathbb{R}}g(y)H(y;x)dy
\end{equation}
is not defined. This is a contradiction, and the proposition is proved.
\end{proof}
\section{Nonlocal Diffusions \& the Quantum Kolmogorov Backward Equation}\label{QKBE}
\subsection{A Nonlocal Derivation of the Quantum Kolmogorov Backward Equation}
In this section, and in section \ref{PathInt}, we follow standard steps, for example following those described in \cite{Hall}, in deriving a nonlocal formulation of the quantum Fokker-Planck equation and associated integral kernel function.
\newline
\newline
As noted above, there are a number of issues in extending rigorous results on spectral triples, and noncommutative geometry to unbounded operators on noncompact manifolds. We therefore proceed on a purely formal basis with the aim of generating useful results. We start with the usual Hamiltonian function for a free particle of mass $m$:
\begin{equation}\label{Hamiltonian}
\hat{H}(\hat{p})=\frac{\hat{p}^2}{2m}
\end{equation}
Now, rather than the usual definition, $\hat{p}\psi=i\partial_x$ we set $\hat{p}\psi=(i\partial_x,h)$, and we have:
\begin{equation}
\begin{split}
\hat{p}\psi(x)=i\partial_x\bigg(\int_{\mathbb{R}}h(x-y)\psi(y)dy\bigg)\\
=i\partial_x\bigg(\int_{\mathbb{R}}h(y)\psi(x-y)dy\bigg)\\
=h\ast i\partial_x\psi(x)
\end{split}
\end{equation}
Therefore, inserting this into (\ref{Hamiltonian}) we have:
\begin{equation}\label{Hamitonian2}
\begin{split}
\hat{H}\psi=h\ast h\ast \frac{-1}{2m}\frac{\partial^2\psi}{\partial x^2}\\
=\frac{-1}{2m}\frac{\partial^2}{\partial x^2}\bigg(\int_{\mathbb{R}}H(x-y)\psi(y)dy\bigg)\\
H=h\ast h
\end{split}
\end{equation}
Inserting this into the Schr{\"o}dinger equation, we get:
\begin{equation}
i\frac{\partial\psi}{\partial t}=\frac{1}{2m}\frac{\partial^2}{\partial x^2}\bigg(\int_{\mathbb{R}}H(x-y)\psi(y)dy\bigg)
\end{equation}
After carrying out the usual Wick rotation: $t=i\tau$ we get our nonlocal Fokker-Planck equation:
\begin{equation}\label{FP_NL}
\frac{\partial\psi}{\partial\tau}+\frac{1}{2m}\frac{\partial^2}{\partial x^2}\bigg(\int_{\mathbb{R}}H(x-y)\psi(y)dy\bigg)=0
\end{equation}
We know from proposition \ref{prop_1}, that the moments of $h$ must exist. This in turn implies that the moments of $h\ast h$ must also exist. Therefore, we can expand (\ref{FP_NL}) using a Kramers-Moyal expansion:
\begin{equation}\label{FP_KM}
\begin{split}
\frac{\partial\psi}{\partial\tau}+\frac{1}{2m}\frac{\partial^2}{\partial x^2}\bigg(\sum_{k\geq 0} \frac{(-1)^k\mu_H^k}{k!}\frac{\partial^{(k+2)}\psi}{\partial x^{(k+2)}}\bigg)=0\\
\mu_H^k=\int_{\mathbb{R}} y^kH(y)dy
\end{split}
\end{equation}
Following, \cite{Hicks2}, we can write $s=\int_{x_0}^{x} \frac{dy}{\sqrt{g(y)}}$ to get:
\begin{equation}
\mu_H^k=\int_{\mathbb{R}} y^kH(y)\frac{dy}{\sqrt{g(y)}}
\end{equation}
Finally, with infinite degrees of freedom in the metric function: $g(y)$, we can use a moment matching algorithm to solve:
\begin{equation}
\mu_H^k=\frac{2\varepsilon^k}{(k+1)(k+2)}
\end{equation}
Inserting this into equation (\ref{FP_KM}), we get back to the quantum Fokker-Planck equation: (\ref{QFP_eqn}). Thus, on a formal level, combining proposition 1 from \cite{Hicks2}, we have shown that given a nonlocal diffusion, we can find a quantum stochastic process, and that these 2 processes are intrinsically linked.
\subsection{The Fundamental Solution}\label{PathInt}
In this section, we follow a standard method in deriving the fundamental solution to the Sch{\"o}dinger equation, that can be used as an integral kernel function (for example see \cite{Hall}, \cite{Hicks2}), but based on the the nonlocal Hamiltonian function: (\ref{Hamitonian2}). For a given Hamiltonian: $\hat{H}$, the Schr{\"o}dinger equation is:
\begin{equation}\label{Schrodinger}
i\frac{\partial\psi}{\partial t}=\hat{H}\psi
\end{equation}
Following, \cite{Hall} chapter 4, we start by assuming $\psi(x,t)$ has the form:
\begin{equation}\label{test_fn}
\psi(x,t)=exp(i(px-\omega(p)t)
\end{equation}
Using Hamiltonian: (\ref{Hamitonian2}), and inserting (\ref{test_fn}) we get:
\begin{equation}
\begin{split}
\omega (p)\psi=-\frac{1}{2m}\frac{\partial^2}{\partial x^2}\bigg(\int_{\mathbb{R}}H(y)exp\Big(i\big(p(x-y)-\omega (p)t\Big)dy\bigg)\\
=-\frac{1}{2m}\frac{\partial^2}{\partial x^2}\bigg(e^{i\big(px-\omega (p)t\big)}\int_{\mathbb{R}}H(y)e^{-ipy}dy\bigg)\\
=-\frac{1}{2m}\frac{\partial^2}{\partial x^2}\bigg(e^{i\big(px-\omega (p)t\big)}\widetilde{H}(p)\bigg)
\end{split}
\end{equation}
So finally, we end up with:
\begin{equation}
\omega (p)\psi=\frac{p^2}{2m}\widetilde{H}(p)\psi
\end{equation}
We can now use this function to calculate the required non-Gaussian kernel function.
\begin{proposition}\label{Kernel}
Let the Hamiltonian be given by: (\ref{Hamitonian2}). Then the fundamental solution to the Schr{\"o}dinger equation is given by:
\begin{equation}
K^{H}_t(x,t)=\frac{1}{\sqrt{2\pi}}\mathcal{F}^{-1}\bigg(exp\Big(\frac{-ip^2\widetilde{H}(p)t}{2m}\Big)\bigg)
\end{equation}
For initial conditions: $\psi_0(x)\in L^2(\mathbb{R})\cap L^1(\mathbb{R})$, the solution to the Schr{\"o}dinger equation is given by:
\begin{equation}
\psi(x,t)=\frac{1}{\sqrt{2\pi}}\psi_0\ast K^{H}_t
\end{equation}
\end{proposition}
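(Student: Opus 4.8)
The plan is to solve the Schr\"odinger equation (\ref{Schrodinger}) with the nonlocal Hamiltonian (\ref{Hamitonian2}) by the standard Fourier-analytic method, using the dispersion relation $\omega(p) = \frac{p^2}{2m}\widetilde{H}(p)$ just derived. First I would take the spatial Fourier transform of both sides of the nonlocal Fokker-Planck/Schr\"odinger equation. Because the Hamiltonian acts on $\psi$ by $-\frac{1}{2m}\partial_x^2(H \ast \psi)$, and convolution becomes multiplication under $\mathcal{F}$ while $\partial_x^2$ becomes multiplication by $-p^2$, the equation $i\partial_t\psi = \hat H\psi$ transforms into the ODE $i\partial_t \widetilde{\psi}(p,t) = \frac{p^2}{2m}\widetilde{H}(p)\,\widetilde{\psi}(p,t)$ in $t$ for each fixed $p$. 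This is exactly consistent with plugging the plane wave (\ref{test_fn}) into the equation, which is why the preceding computation of $\omega(p)$ is the key input.

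Next I would solve this ODE: $\widetilde{\psi}(p,t) = \widetilde{\psi_0}(p)\exp\!\big(-i\frac{p^2\widetilde{H}(p)}{2m}t\big)$, where $\widetilde{\psi_0} = \mathcal{F}\psi_0$. Defining $\widetilde{K^H_t}(p) = \exp\!\big(-i\frac{p^2\widetilde{H}(p)}{2m}t\big)$, we have $\widetilde{\psi}(p,t) = \widetilde{\psi_0}(p)\widetilde{K^H_t}(p)$, so that inverting the Fourier transform and using the convolution theorem gives $\psi(x,t) = \frac{1}{\sqrt{2\pi}}\,\psi_0 \ast K^H_t$ with $K^H_t = \mathcal{F}^{-1}\big(\exp(-i\frac{p^2\widetilde{H}(p)}{2m}t)\big)$, matching the claimed formula up to the stated normalization constant. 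The initial condition $\psi(x,0) = \psi_0(x)$ is recovered because $\widetilde{K^H_0}(p) = 1$, i.e. $K^H_0 = \sqrt{2\pi}\,\delta$, which is where the $1/\sqrt{2\pi}$ factor in $\psi_0 \ast K^H_t$ comes from under the chosen Fourier normalization conventions.

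The hypothesis $\psi_0 \in L^2(\mathbb{R}) \cap L^1(\mathbb{R})$ is what makes these manipulations legitimate: $L^1$ ensures $\widetilde{\psi_0}$ is a bounded continuous function (so the pointwise product with the unimodular-in-favorable-cases symbol $\widetilde{K^H_t}$ makes sense), while $L^2$ lets us invert via Plancherel and keeps $\psi(\cdot,t) \in L^2$ for all $t$. I would remark that $|\widetilde{K^H_t}(p)| = 1$ precisely when $p^2\widetilde{H}(p)$ is real, which holds when $H$ is an even real distribution (e.g. $H = h\ast h$ with $h$ even), in which case the evolution is unitary on $L^2$; more generally one only needs $\mathrm{Re}(ip^2\widetilde{H}(p)) \geq 0$ for boundedness, consistent with the Wick-rotated diffusive picture.

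\textbf{Main obstacle.} The delicate point is the meaning of $K^H_t$ itself as the inverse Fourier transform of $\exp(-i\frac{p^2\widetilde{H}(p)}{2m}t)$: since $\widetilde H(p) = \sum_k \frac{(-i)^k\mu_H^k}{k!}p^k$ is in general a nonquadratic (indeed transcendental) function of $p$, the symbol need not decay at infinity, so $K^H_t$ exists only as a tempered distribution rather than a classical function, and the convolution $\psi_0 \ast K^H_t$ must be interpreted accordingly. I would handle this by defining everything on the Fourier side — declaring the solution map to be the Fourier multiplier operator with symbol $\widetilde{K^H_t}$ — and noting that on the dense subspace of $L^1 \cap L^2$ functions whose transforms have suitable decay (or compact support) the convolution formula is literal; this is precisely the "formal basis" caveat the paper has already flagged, and in the moment-matched case of interest the symbol reduces to $\exp\big(\frac{\sigma^2 t}{m\varepsilon^2}(\cosh(\varepsilon p) - 1 - \tfrac{1}{2}\varepsilon^2 p^2 + \ldots)\big)$-type expressions whose behavior can be analyzed directly.
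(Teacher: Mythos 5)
Your proposal is correct and follows essentially the same route as the paper: both diagonalize the nonlocal Hamiltonian via the spatial Fourier transform, using the dispersion relation $\omega(p)=\frac{p^2}{2m}\widetilde{H}(p)$ to reduce the equation to an ODE in $t$ for each $p$, and then recover $\psi=\frac{1}{\sqrt{2\pi}}\psi_0\ast K^H_t$ by the convolution theorem (the paper phrases this as Hall's Theorem 4.5 argument, truncating $K^H_t$ to $[-n,n]$ and passing to the $L^2$ limit). Your closing remark on the boundedness of the symbol $\exp\big(-\tfrac{ip^2\widetilde{H}(p)t}{2m}\big)$ is in fact more careful than the paper's proof, which tacitly assumes this when asserting that pointwise convergence of $\mathcal{F}(K^H_t 1_{[-n,n]})$ suffices for $L^2$ convergence of $K^H_t 1_{[-n,n]}\ast\psi_0$.
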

\begin{proof}
The proof follows the same steps outlined in \cite{Hall}, Theorem 4.5. First note, that: $\int_{\mathbb{R}}K^{H}_tdp$, solves the Sch{\"o}dinger equation.
\newline
\newline
Also, we have that $\mathcal{F}(K^{H}_t1_{[-n,n]}\ast\psi_0)=\sqrt{2\pi}\mathcal{F}(K^{H}_t1_{[-n,n]})\mathcal{F}\psi_0$.
\newline
\newline
$\mathcal{F}(K^{H}_t1_{[-n,n]})$ is bounded, and converges pointwise to: $\frac{1}{\sqrt{2\pi}}exp\Big(\frac{-ip^2\widetilde{H}(p)t}{2m}\Big)$. This is enough to guarantee that $K^{H}_t1_{[-n,n]}\ast\psi_0$ must converge in $L^2(\mathbb{R})$.
\newline
\newline
We have shown above that $K^{H}_t$ solves the Schr{\"o}dinger equation, and it is easy to see that: $K^{H}_0\psi_0=\psi_0$. Therefore $\psi(x,t)=\frac{1}{\sqrt{2\pi}}\psi_0\ast K^{H}_t$ is the solution required.
\end{proof}
Given initial conditions: $\psi_0(x)$, the solution to equation (\ref{Schrodinger}) can be written: $exp(-it\hat{H})\psi_0(x)$. If we switch now to imaginary time, equation (\ref{Schrodinger}) becomes the Kolomogorov backward equation. Writing $u(x,\tau)$ rather than $\psi(x,\tau)$, $\sigma^2=\frac{1}{m}$, and $\tau=it$, we have:
\begin{equation}
\frac{\partial u}{\partial \tau}+\frac{\sigma^2}{2}\frac{\partial^2}{\partial x^2}\bigg(\int_{\mathbb{R}} H(x-y)u(y,\tau)dy\bigg)=0
\end{equation}
We get for a small time step $\delta\tau$, $u(x,\delta\tau)=exp(-\delta\tau\hat{H})u(x,0)$, and so using proposition \ref{Kernel}: 
\begin{equation}
u(x,\delta\tau)=\frac{1}{2\pi}\int_{\mathbb{R}} K^H_{\delta\tau}(x-x_0)u(x_0,0)dx_0
\end{equation}
\subsection{A Moment Matching Algorithm}\label{MomMatch}
In practice, for a particular choice of the volatility parameter: $\sigma$, and the nonlocality function $H(x)$, there are 3 existing potential methods for the calculation of solutions to the quantum Fokker-Planck equation, or the associated quantum Kolmogorov Backward equation:
\begin{enumerate}
\item[i)] One can use the particle Monte-Carlo method, as described in \cite{Hicks} section 4.
\item[ii)] One can calculate the value for an integral kernel using the results described in \cite{Hicks2}, proposition 4.
\item[iii)] One could use a numerical calculation of the inverse Fourier transform to evaluate the kernel function from \ref{Kernel}.
\end{enumerate}
Unfortunately, each of these methods have drawbacks. Method i) is generally even slower than conventional Monte-Carlo methods, owing to the additional steps required to calculate the impact of the nonlocality function $H(x)$. Furthermore, the method must retain memory of the ongoing position of each Monte-Carlo path throughout the simulation. This is in contrast to conventional Monte-Carlo simulations, where each path can be simulated in isolation to the other paths.
\newline
\newline
The principal drawback with using the Kernel functions defined in \cite{Hicks2} proposition 4, and in proposition \ref{Kernel} above, relate to the instability of the integrals involved, and the resulting difficulty in their numerical approximation. Therefore, in this section we outline a moment matching algorithm, that provides the possibility for fast \& robust approximation of the kernel functions. This method is based on the following proposition.
\begin{proposition}\label{moments}
Let the moments for the nonlocalility function $H(x)$, be given by: $a_n$. Then the moments for the Kernel function described in proposition \ref{Kernel} are given by:
\begin{equation}
\begin{split}
\mu_1=0\\
n\geq 2, \mu_n=\sum_j\frac{n!}{2(\#P^j_n)!}\prod_{i\in P^j_n} (\sigma^2\tau)a_{i-2}
\end{split}
\end{equation}
Where: $P^j_n$ represent the partitions of $n$ without using the number $1$, and $\#P^j_n$ represents the number of elements in the partition.
\end{proposition}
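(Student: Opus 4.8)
The plan is to read off the moments of $K^H_t$ from the Taylor expansion of its Fourier transform at the origin, exploiting the fact (Proposition \ref{Kernel}) that this Fourier transform is the exponential of a power series whose lowest-order term is quadratic in $p$. First I would pass to imaginary time: substituting $\tau = it$ and $\sigma^2 = 1/m$ into the expression $\exp\big(-ip^2\widetilde H(p)t/(2m)\big)$ from Proposition \ref{Kernel} gives, up to the fixed normalisation constant, $\widehat{K}(p) = \exp\big(-\tfrac{\sigma^2\tau}{2}\,p^2\widetilde H(p)\big)$. Since the moments of a distribution are, up to powers of $i$, the derivatives of its characteristic function at $0$, it suffices to compute the Taylor coefficient $[p^n]\widehat{K}(p)$ and then multiply by $n!$ and the appropriate power of $i$.

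Next I would expand $\widetilde H$ itself. Because all the moments $a_k$ of $H$ exist (Proposition \ref{prop_1}), we have $\widetilde H(p) = \sum_{k\geq 0}\frac{(-i)^k a_k}{k!}p^k$, so that $p^2\widetilde H(p) = \sum_{j\geq 2}\frac{(-i)^{j-2}a_{j-2}}{(j-2)!}p^j$ is a power series with no constant and no linear term. Writing $g(p) := \log\widehat K(p) = -\tfrac{\sigma^2\tau}{2}\sum_{j\geq 2}\frac{(-i)^{j-2}a_{j-2}}{(j-2)!}p^j$, the vanishing of the $p^0$ and $p^1$ coefficients of $g$ immediately yields $\widehat K'(0)=0$, hence $\mu_1 = 0$. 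For $n\geq 2$ I would exponentiate, $\widehat K(p) = \mathrm{const}\cdot\sum_{m\geq 0}\frac{1}{m!}g(p)^m$, and extract $[p^n]$: expanding $g(p)^m$ produces a sum over ordered $m$-tuples $(j_1,\dots,j_m)$ of integers $\geq 2$ with $j_1+\cdots+j_m = n$, each contributing $\prod_{l}\big(-\tfrac{\sigma^2\tau}{2}\big)\frac{(-i)^{j_l-2}a_{j_l-2}}{(j_l-2)!}$. Grouping these tuples by the underlying unordered partition $P^j_n$ of $n$ into parts none of which is $1$, the number of tuples realising a fixed partition is the multinomial count $(\#P^j_n)!/\prod_r m_r!$ where $m_r$ is the multiplicity of the part $r$; combining this with the overall $1/m!$ from the exponential, the $1/(j_l-2)!$ factors from $\widetilde H$, and the $n!$ that converts a Taylor coefficient into a moment, one reassembles the claimed expression $\sum_j \frac{n!}{2(\#P^j_n)!}\prod_{i\in P^j_n}(\sigma^2\tau)a_{i-2}$, the powers of $-i$ and of $\tfrac12$ cancelling against the $i^n$ relating $\mu_n$ to $\widehat K^{(n)}(0)$ under the sign and normalisation conventions fixed in the first step.

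The main obstacle is precisely this last bookkeeping step: keeping the three independent sources of factorials consistent — the $1/m!$ from $\exp$, the $1/(j_l-2)!$ from the expansion of $\widetilde H$, and the $n!$ from differentiation — so that they collapse to the stated combinatorial weight $n!/\big(2(\#P^j_n)!\big)$, and checking that the symmetry factors $\prod_r m_r!$ for repeated parts are correctly absorbed in the meaning of the sum over partitions (this is where I would expect the sign/normalisation conventions and any implicit rescaling of the $a_{i-2}$ to need pinning down carefully). I would also verify that interchanging the sum over $m$ with the extraction of $[p^n]$ is legitimate; since each $j_l\geq 2$ forces $m\leq n/2$, only finitely many $m$ contribute to a given power $p^n$, so the whole manipulation is formal-algebraic and no analytic convergence issue arises.
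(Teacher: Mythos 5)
Your route is the same as the paper's: expand $\widetilde H(p)$ as the moment series of $H$, insert it into the exponent of $\mathcal{F}(K^H_t)$ from Proposition \ref{Kernel}, Wick rotate to get the moment generating function of the kernel, and read the moments off the Taylor coefficients, with $\mu_1=0$ following from the absence of a linear term in the exponent. The paper's proof is exactly this, except that it passes from the exponential to the partition formula in a single unexplained line.

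The difficulty is the step you yourself single out as the main obstacle, namely the claim that the $1/m!$ from the exponential, the $1/(j_l-2)!$ from $\widetilde H$, and the $n!$ from differentiation ``collapse'' to the weight $n!/\big(2(\#P^j_n)!\big)$. Carried out honestly, the expansion of
\begin{equation*}
M_{K^H_\tau}(p)=\exp\Big(\sum_{j\geq 2}\frac{\sigma^2\tau\,a_{j-2}}{2\,(j-2)!}\,p^j\Big)
\end{equation*}
gives, for $n\geq 2$,
\begin{equation*}
\mu_n=n!\sum_{P}\frac{1}{\prod_r m_r(P)!}\;\prod_{i\in P}\frac{\sigma^2\tau\,a_{i-2}}{2\,(i-2)!},
\end{equation*}
the sum running over partitions $P$ of $n$ into parts $\geq 2$, with $m_r(P)$ the multiplicity of the part $r$: each part carries its own factor $\tfrac{1}{2(i-2)!}$ rather than a single overall $\tfrac12$, and the symmetry factor is $\prod_r m_r(P)!$, not $(\#P^j_n)!$. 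This coincides with the stated weight only for $n=2,3$. Already at $n=4$ the honest expansion gives $\mu_4=3(\sigma^2\tau)^2+6\sigma^2\tau\,a_2$, whereas the proposition's formula gives $6(\sigma^2\tau)^2+12\sigma^2\tau\,a_2$ (and the paper's own Gaussian illustration quotes $3(\sigma^2\tau)^2+12\sigma^2\tau\varepsilon^2$, consistent with neither). So the bookkeeping you deferred cannot be closed in the form you assert: you would need either to state and prove the corrected weight above, or to make precise a rescaling/absorption convention for the $a_{i-2}$ and the repeated parts that the statement, as written, does not supply. Your suspicion that ``an implicit rescaling of the $a_{i-2}$'' is lurking is exactly right, and the paper's one-line expansion shares the same gap.
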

\begin{proof}
We have that $\widetilde{H}(p)$ represents the characteristic function for the distribution $H(x)$:
\begin{equation}
\widetilde{H}(p)=\int_{\mathbb{R}}e^{ipx}H(x)dx
\end{equation}
Therefore, given the moments: $a_n$, we can write:
\begin{equation}\label{char_H}
\widetilde{H}(p)=\sum_{j\geq 0}\frac{a_j(ip)^j}{j!}
\end{equation}
Furthermore, from proposition \ref{Kernel}, the Fourier transform of the kernel we are after is given (up to normalising constant) by:
\begin{equation}\label{char_K}
\mathcal{F}\Big(K^{H}_t(x,t)\Big)=exp\Big(\frac{-ip^2t\widetilde{H}(p)}{2m}\Big)
\end{equation}
Inserting (\ref{char_H}) into (\ref{char_K}) we get:
\begin{equation}
\mathcal{F}\Big(K^{H}_t(x,t)\Big)=exp\Big(\frac{-ip^2t}{2m}\sum_{j\geq 0}\frac{a_j(ip)^j}{j!}\Big)
\end{equation}
Performing the Wick rotation $\tau=it$, and using the notation $\sigma^2=1/m$, gives:
\begin{equation}
\mathcal{F}\Big(K^{H}_{\tau}(x,\tau)\Big)=exp\Big(-\frac{\sigma^2p^2\tau}{2}\sum_{j\geq 0}\frac{a_j(ip)^j}{j!}\Big)
\end{equation}
Therefore the moment generating function for $K^{H}_{\tau}$ is given by:
\begin{equation}
M_{K^{H}_{\tau}}(p)=exp\Big(\frac{p^2\sigma^2\tau}{2}\sum_{j\geq 0}\frac{a_j(p)^j}{j!}\Big)
\end{equation}
Expanding out the powers of $p$ gives:
\begin{equation}
M_{K^{H}_{\tau}}(p)=1+\sum_{k\geq 2}\bigg(\sum_j\frac{1}{2(\#P^j_k)!}\prod_{i\in P^j_n} (\sigma^2\tau)a_{i-2}\bigg)p^k
\end{equation}
The result then follows from the definition of the moment generating function.
\end{proof}
Once the moments of a probability distribution are known, there are numerous numerical methods that can be applied to finding the final probability density function. For example see \cite{Mnat}, and \cite{TekelCohen}. We defer further investigation of these methods to a future study.
\subsubsection{Illustrative Example}
In this subsection, we briefly illustrate some results where $H(x)$ is a normal distribution with variance given by $\varepsilon^2$. In this case we have $a_{2n-1}=0, a_{2n}=\epsilon^{2n}(2n-1)!!$, where $(2n-1)!!=(2n-1)(2n-3)...1$.
\newline
\newline
Plugging this into proposition \ref{moments}, we get:
\begin{itemize}
\item Since by assumption, the partitions do not include $1$, $P^j_2$ consists only of the set: $\{2\}$, so $\mu_2=\sigma^2\tau$
\item $\mu_3=0$
\item $P^j_4$ consists of: $\{4\}$, and $\{2,2\}$ so $\mu_4=3(\sigma^2\tau)^2+12(\sigma^2\tau)\varepsilon^2$
\end{itemize}
So finally we see that applying a Gaussian nonlocality function $H(x)$ with zero mean, and variance $\epsilon^2<<\sigma^2\tau$, increases the Kurtosis of the resulting kernel function by an amount: $12(\sigma^2\tau)\epsilon^2$. The standard deviation is not impacted. In effect, $H(x)$ has given the kernel function ``fat tails''.
\newline
\newline
We also note the crucial role of the ratio $(\epsilon^2/\sigma^2\tau)$. The smaller this ratio, the smaller the impact from the nonlocality on the kurtosis of the kernel function. As $\tau\rightarrow 0$, we find that the increase in kurtosis becomes more and more pronounced. $\mu_4$ decreases with $O(\tau)$ rather than $O(\tau^2)$ as would be the case for a standard Gaussian kernel function. This mirrors the numerical results shown in \cite{Hicks2}, section 5, where it was found that the non-Gaussian kernel functions tended to the standard Gaussian for longer time to maturities.
\section{Conclusion}
In this article we have shown how to derive the quantum Kolmogorov backward equation through a nonlocal formulation of quantum mechanics. This builds on results obtained in \cite{Hicks}, and \cite{Hicks2} to show that there are deep links between nonlocal diffusions, and quantum stochastic processes.
\newline
\newline
We have suggested how to amend the canonical spectral triple that is used in the formulation of conventional quantum mechanics using the framework of noncommutative geometry. Although there are significant obstacles in extending the framework to unbounded operators, the approach leads to useful ways of interpreting equations, and potential new avenues for developing analytic and numerical methods for real world applications.
\bibliographystyle{amsplain}


\end{document}